\documentclass[11pt]{article}

\usepackage[a4paper,margin=1in]{geometry}
\usepackage{amsmath,amssymb,amsthm,mathtools}
\usepackage{enumitem}
\usepackage{xcolor}
\usepackage{microtype}
\usepackage{hyperref}

\hypersetup{
  colorlinks=true,
  linkcolor=blue!50!black,
  citecolor=blue!50!black,
  urlcolor=blue!50!black,
  pdftitle={A Deterministic (2+epsilon)-Approximation for Weighted Feedback Vertex Set in Tournaments},
  pdfauthor={Hanqing Li, Zihan Wu}
}

\newtheorem{theorem}{Theorem}[section]
\newtheorem{lemma}[theorem]{Lemma}

\newtheorem{corollary}[theorem]{Corollary}
\theoremstyle{definition}
\newtheorem{definition}[theorem]{Definition}

\newcommand{\TFVS}{\mathsf{TFVS}}
\newcommand{\OPT}{\operatorname{OPT}}
\newcommand{\tauv}{\tau}
\newcommand{\mcH}{H}
\newcommand{\Rk}{R_k}
\newcommand{\Pk}{P_k}
\newcommand{\Lk}{L_k}

\title{A Deterministic $(2+\varepsilon)$-Approximation for Weighted Feedback Vertex Set in Tournaments}
\author{Hanqing Li\quad Zihan Wu\\[2pt]\small Peking University}
\date{}

\begin{document}
\maketitle

\begin{abstract}
We study the weighted feedback vertex set problem in tournaments.  For every
fixed integer $k\geq 2$, we give a deterministic $(2+1/k)$-approximation
algorithm with running time $n^{2^{O(k)}}$, apart from polynomial dependence
on the encoding length of the weights.  Consequently, for every fixed
$\varepsilon>0$, weighted feedback vertex set in tournaments has a
deterministic $(2+\varepsilon)$-approximation running in time
$n^{2^{O(1/\varepsilon)}}$.  The algorithm combines two ingredients.  When the
triangle graph of the tournament has bounded clique number, a chain
decomposition of its transitive complement yields an exact dynamic program for
a maximum-weight transitive subtournament.  When the clique number is large, a
structural theorem for triangle graphs supplies a constant-size strongly good
cost vector.  A local-ratio reduction with this cost vector gives the claimed
guarantee.  As a by-product, the dynamic program solves weighted feedback
vertex set exactly in $\mathcal B_7$-free tournaments in time $O(n^7)$, where
$\mathcal B_7$ is the family of seven-vertex tournaments with feedback vertex
set number at least three.
\end{abstract}

\section{Introduction}

A tournament is an orientation of the complete graph.  The feedback vertex
set problem asks for a minimum-weight set of vertices whose deletion makes the
digraph acyclic.  We call the resulting problem weighted feedback vertex set in
tournaments, or weighted $\TFVS$.

The special structure of tournaments makes $\TFVS$ a clean structured
hitting-set problem: a tournament is acyclic if and only if it is transitive,
and a tournament is cyclic if and only if it contains a directed triangle.
The deterministic approximation ratio was successively improved from $5/2$
by Cai, Deng, and Zang~\cite{cai2001} to $7/3$ by Mnich, Vassilevska Williams,
and V\'{e}gh~\cite{mnich2016}; see also the simpler $7/3$ algorithm of Aprile
et al.~\cite{aprile2023}.  Ghorbani and Mnich recently obtained a deterministic
$9/4$-approximation, even for the larger class of quasi-transitive
digraphs~\cite{ghorbani2026}.  Lokshtanov et al.~\cite{lokshtanov2021} gave a
factor-two approximation for weighted $\TFVS$ that runs in randomized
polynomial time; enumerating its pivot choices gives a deterministic
quasi-polynomial-time algorithm.

This paper gives, to the best of our knowledge, the first deterministic
polynomial-time approximation guarantee that can be made arbitrarily close to
two.  For $k=4$ our ratio matches $9/4$, and for every $k\geq 5$ it improves
that deterministic polynomial-time bound.  The price is a large dependence on
the accuracy parameter: the exponent is double exponential in
$1/\varepsilon$.  Thus the contribution is structural and complexity
theoretic rather than practical.

Our starting point is the triangle graph.  Its vertices are the tournament
vertices, and two vertices are adjacent when their tournament arc belongs to a
directed triangle.  Ghorbani and Mnich~\cite{ghorbani2026} prove that the
complement of the triangle graph, oriented by the tournament, is transitive.
This gives a poset and permits a chain decomposition.  They also prove that a
sufficiently large transitive tournament inside the triangle graph forces a
constant-size tournament with a large feedback vertex set.

Our main technical observation is that a chain decomposition of the complement
does more than certify bounded width: it leads to an exact dynamic program.
The state records the last selected vertex on each chain.  Relative to any
fixed vertex, the vertices of a complement chain that point to it form a
prefix.  Consequently, two partial solutions with the same last vertices have
exactly the same possible extensions.

The remainder of the paper is organized as follows.
Section~\ref{sec:prelim} introduces the terminology and the external structural
theorem.  Section~\ref{sec:dp} proves the exact bounded-width dynamic program.
Section~\ref{sec:obstruction} constructs the constant-size strongly good cost
vector.  Section~\ref{sec:algorithm} gives the local-ratio algorithm and proves
its approximation ratio.  Section~\ref{sec:complexity} analyzes the running
time and derives the $(2+\varepsilon)$ statement.

\section{Preliminaries}\label{sec:prelim}

\subsection{Tournaments and feedback vertex sets}

\begin{definition}[Tournament]
A tournament $T$ is a digraph such that for every two distinct vertices $u,v$,
exactly one of $u\to v$ and $v\to u$ is present.  For $X\subseteq V(T)$,
$T[X]$ denotes the induced subtournament.
\end{definition}

\begin{definition}[Feedback vertex set]
A set $S\subseteq V(T)$ is a feedback vertex set (FVS) if $T-S$ is acyclic.
Throughout, an input weight function is
$w\colon V(T)\to\mathbb{Q}_{\geq 0}$, encoded in binary.  We write
$w(S)=\sum_{v\in S}w(v)$ and
\[
  \OPT(T,w)=\min\{w(S):S\text{ is an FVS of }T\}.
\]
For an unweighted tournament $U$, let $\tauv(U)$ denote the minimum cardinality
of an FVS of $U$.
\end{definition}

\begin{definition}[Transitive tournament]
A tournament is transitive if its vertices admit an ordering
$v_1,\ldots,v_m$ such that $v_i\to v_j$ whenever $i<j$.
\end{definition}

Every acyclic tournament has a unique topological ordering and is transitive.
Conversely, a transitive tournament is acyclic.  Also, every directed cycle in
a tournament contains a directed triangle.  Indeed, take a shortest directed
cycle.  If it has length at least four, the tournament arc between its first
and third vertices yields either a shorter directed cycle or a directed
triangle.  Hence an FVS in a tournament is exactly a vertex set hitting every
directed triangle.

For $k\geq2$, let $\mathcal B_{2k+1}$ be the family of tournaments $U$ on
$2k+1$ vertices such that $\tauv(U)\geq k$.  We say that a tournament is
$\mathcal B_{2k+1}$-free if it has no induced subtournament in this family.
This is the family denoted $\mathcal T_{2k+1}$ in~\cite{ghorbani2026}; we use
$\mathcal B$ to distinguish a family of obstructions from individual named
tournaments.

\subsection{The triangle graph}

\begin{definition}[Triangle graph]
The triangle graph $\mcH(T)$ is the undirected graph with vertex set $V(T)$ in
which $uv$ is an edge if and only if the tournament arc between $u$ and $v$
belongs to some directed triangle of $T$.  When needed, every edge of
$\mcH(T)$ is oriented according to the corresponding arc of $T$.
\end{definition}

Let $\overline{\mcH}(T)$ denote the directed complement: it contains exactly
those tournament arcs that belong to no directed triangle, with their original
orientation.  We use the following two results as a black box.  They are
Lemma~6 and Theorem~10, respectively, of Ghorbani and
Mnich~\cite{ghorbani2026}, specialized to tournaments.

\begin{theorem}[Ghorbani--Mnich structural theorem]\label{thm:blackbox}
For every tournament $T$:
\begin{enumerate}[label={\rm(\roman*)}]
  \item $\overline{\mcH}(T)$ is a transitive digraph.  In particular, its arcs
  define a strict partial order, and the underlying graph of $\mcH(T)$ is
  perfect.
  \item For every integer $k\geq 2$, if the oriented graph $\mcH(T)$ contains
  a transitive tournament on $2^{k-1}+1$ vertices, then $T$ contains a member
  of $\mathcal B_{2k+1}$.
\end{enumerate}
\end{theorem}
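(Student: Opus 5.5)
This theorem is imported from Ghorbani and Mnich~\cite{ghorbani2026}; if I had to reprove it, I would plan as follows.

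For part~(i), take arcs $u\to v$ and $v\to x$ of $\overline{\mcH}(T)$, so neither lies in a directed triangle. The goal is to show that $u\to x$ is an arc of $T$ lying in no directed triangle. First, if $x\to u$, then $u\to v\to x\to u$ is a directed triangle containing $u\to v$, a contradiction; hence $u\to x$. Next, suppose $u\to x$ lies in a directed triangle $u\to x\to y\to u$. I would compare $y$ with $v$ (note $y\neq v$, since $v\to x$ while $x\to y$).
\begin{itemize}[label={}]
  \item If $v\to y$, then $v\to y\to u$ together with $u\to v$ gives the triangle $u\to v\to y\to u$, which contains $u\to v$. This is a contradiction.
  \item If $y\to v$, then $x\to y\to v$ together with $v\to x$ gives the triangle $v\to x\to y\to v$, which contains $v\to x$. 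This is again a contradiction.
\end{itemize}
So $\overline{\mcH}(T)$ is transitive. It is irreflexive and antisymmetric because $T$ is a tournament, so it is a strict partial order. The complement of $\mcH(T)$ is then a comparability graph, hence perfect, and the perfect graph theorem gives that $\mcH(T)$ is perfect.

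For part~(ii), I would induct on $k$, aiming for a slightly stronger invariant that can be carried through the induction. A suitable choice: a transitive tournament $v_1,\dots,v_m$ in $\mcH(T)$ with $m=2^{k-1}+1$ yields a set $X$ of $2k+1$ vertices, each edge $v_iv_{i+1}$ supplying a triangle witness, with $\tau(T[X])\ge k$.
\begin{itemize}[label={}]
  \item \emph{Base case $k=2$.} Three vertices $v_1\to v_2\to v_3$ with $v_1\to v_3$, each arc lying in some triangle. Collecting witnesses should give five vertices whose triangles cannot be hit by one vertex. This needs a short case analysis over whether the witnesses coincide.
  \item \emph{Inductive step.} Split the chain into two halves sharing the middle vertex, each of size $2^{k-2}+1$. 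Apply induction to each half to get two obstructions. Use the arcs between the halves, which are triangle-graph edges directed from the first half to the second, to glue them. The key idea is a doubling step: halving the chain should cost only one unit of $\tau$ and two extra vertices.
\end{itemize}

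The main obstacle is controlling overlaps so the vertex count stays exactly $2k+1$ while $\tau$ still rises by one. I would try first to pick the witness apex of the long arc $v_1\to v_m$. Together with the endpoints, that triangle is disjoint from one half's obstruction, so its directed triangle forces one extra deletion. Then I would prune vertices of any extra size, using the fact that $\tau$ drops by at most one per removed vertex, to land exactly in $\mathcal B_{2k+1}$.
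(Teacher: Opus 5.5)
As you note, the paper gives no proof of this theorem and imports both parts from Ghorbani and Mnich. Your argument for (i) is complete and correct. Your $k=2$ base case for (ii) also goes through: the apex of $v_1v_3$ always witnesses one of the other two arcs as well, so two apices suffice.

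The inductive step of (ii), however, fails.
\begin{itemize}
\item \emph{Disjointness is unfounded.} Your invariant does not record how many vertices of the obstruction lie on the chain, nor which vertices it avoids. A half-obstruction may contain $y$, which can also be the apex of an arc inside that half. It may also contain whichever endpoint, $v_1$ or $v_m$, belongs to that half. So nothing makes it disjoint from $\{v_1,v_m,y\}$.
\item \emph{Pruning cannot fix the count.} Even granting disjointness, you obtain $2k+2$ vertices with $\tau\ge k$. The fact that ``$\tau$ drops by at most one per deleted vertex'' only yields $\tau\ge k-1$ on $2k+1$ vertices. For example, take two disjoint directed triangles with all arcs from one to the other: $\tau=2$ on six vertices, yet every five-vertex subtournament has $\tau=1$. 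This off-by-one matters, since the ratio $(2k+1)/k$ is exactly what Lemma~\ref{lem:obstruction} needs.
\item \emph{The threshold is unexplained.} Only one of your two half-obstructions is ever used, so the plan never explains why the chain needs $2^{k-1}+1$ vertices.
\end{itemize}

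What is missing is a split and an invariant that make the two-vertex step work. One route that does:
\begin{itemize}
\item \emph{Split by orientation toward the apex, not by position.} With $v_m\to y\to v_1$, split the $2^{k-1}-1$ interior chain vertices into $A^-=\{v:v\to y\}$ and $A^+=\{v:y\to v\}$. One side has at least $2^{k-2}$ vertices.
\item \emph{Recurse on a sub-chain.} Take $D=A^-\cup\{v_m\}$ in the first case and $D=\{v_1\}\cup A^+$ in the second; either way $D$ has at least $2^{k-2}+1$ vertices.
\item \emph{Invariant.} The obstruction $X'$ for $D$, with $|X'|=2k-1$ and $\tau(T[X'])\ge k-1$, consists of at least $k$ vertices of $D$ together with apices of triangles on arcs of $D$.
\item \emph{The two new vertices.} These are $v_1,y$ (respectively $v_m,y$). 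They lie outside $D$, and each dominates or is dominated by all of $D$. Hence neither is such an apex, and both lie outside $X'$. Moreover, together they form a directed triangle with every vertex of $X'\cap D$.
\item \emph{Counting.} Let $X=X'\cup\{\text{new vertices}\}$, so $|X|=2k+1$. An FVS of $X$ that avoids both new vertices must contain $X'\cap D$, which has at least $k$ vertices. An FVS that uses a new vertex has size at least $1+\tau(T[X'])\ge k$. Either way $\tau(T[X])\ge k$.
\item \emph{Propagation.} $X$ contains at least $k+1$ chain vertices, and $y$ is an apex of $v_1v_m$, so the invariant carries over. The base case $k=1$ is a single triangle.
\end{itemize}
This is how two extra vertices buy one unit of $\tau$ without any pruning. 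The pigeonhole split is where the factor two per level is spent, which gives the $2^{k-1}+1$ threshold.
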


For a fixed $k$, define
\[
  \Rk=\frac{2k+1}{k},\qquad
  \Pk=2^{k-1},\qquad
  \Lk=\left\lceil\frac{(2k+1)\Pk}{k+1}\right\rceil.
\]

\subsection{Local ratio and strongly good costs}

We use the following elementary form of local ratio, originating in the method
of Bar-Yehuda and Even~\cite{baryehuda1981}.  Suppose
$w=w'+\Delta c$, where $w'$ and $c$ are nonnegative cost vectors and
$\Delta\geq0$.  If a feasible output $S$ satisfies
\[
  w'(S)\leq \rho\OPT(T,w')
  \quad\text{and}\quad
  c(S)\leq \rho\OPT(T,c),
\]
then $w(S)\leq\rho\OPT(T,w)$.  Indeed, for an optimum solution $O$ under $w$,
\[
 \OPT(T,w')+\Delta\OPT(T,c)
 \leq w'(O)+\Delta c(O)=w(O).
\]

\begin{definition}[Strongly good cost]
A nonnegative cost vector $c$ is strongly $\rho$-good if every feasible FVS
$S$ satisfies
\[
  c(S)\leq \rho\OPT(T,c).
\]
\end{definition}

Strong goodness is stronger than the usual local-ratio condition on minimal
solutions.  It is convenient here because the recursive algorithm may return
any feasible FVS.

\section{Exact solution at bounded triangle width}\label{sec:dp}

We first prove the exact subroutine used when $\mcH(T)$ has small clique
number.

\begin{lemma}\label{lem:dp}
If $\omega(\mcH(T))\leq b$, then a minimum-weight FVS of $T$ can be computed in
time
\[
  O\bigl(bn(n+1)^b\bigr).
\]
In particular, for fixed $b$ the running time is $O_b(n^{b+1})$.
\end{lemma}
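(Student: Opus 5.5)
We work with the complement: a minimum-weight FVS is $V(T)\setminus X$ for a maximum-weight set $X$ with $T[X]$ transitive. We compute such an $X$ by a dynamic program over a chain decomposition of the poset from Theorem~\ref{thm:blackbox}(i).

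\textbf{Reduction to a bounded-width poset.} By Theorem~\ref{thm:blackbox}(i), the arcs of $\overline{\mcH}(T)$ form a strict partial order $P$ on $V(T)$. Two distinct vertices are incomparable in $P$ exactly when their arc lies in a directed triangle, i.e.\ when they are adjacent in $\mcH(T)$. So antichains of $P$ are exactly cliques of $\mcH(T)$, and $P$ has width at most $b$. By Dilworth's theorem, $V(T)$ splits into at most $b$ chains $C_1,\dots,C_b$ of $P$. Such a decomposition can be found in polynomial time by the standard bipartite-matching construction. For $b\ge 2$ this cost is dominated by the main bound. For $b=1$, $\mcH(T)$ is edgeless, so $T$ has no directed triangle and the answer is $\emptyset$.

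Each chain $C_i$ is totally ordered by arcs of $T$, say $c_1\to c_2\to\cdots$, and all of these arcs lie in no directed triangle.

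\textbf{Prefix lemma.} Fix a vertex $v$ and a chain $C=(c_1,c_2,\dots)$. We claim that $\{c\in C: c\to v\}$ is a prefix of $C$. Suppose $i<j$, $c_j\to v$, and $v\to c_i$. Then $v\to c_i\to c_j\to v$ is a directed triangle containing the arc $c_ic_j$. This is impossible, since $c_ic_j$ is an arc of $\overline{\mcH}(T)$.

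\textbf{The dynamic program.} We build a transitive set $X$ by adding its vertices in topological order, so each new vertex becomes the sink. The set $X\cup\{v\}$ is transitive exactly when $T[X]$ is transitive and every $x\in X$ satisfies $x\to v$.

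A state is a tuple $(\ell_1,\dots,\ell_b)$, where $\ell_i$ is the last selected vertex of $C_i$ or $\bot$. There are at most $(n+1)^b$ states. By the prefix lemma, all selected vertices of $C_i$ point to $v$ if and only if $\ell_i\to v$ (or $\ell_i=\bot$). The selected vertices of $C_i$ all precede $\ell_i$ in the chain, and by the prefix lemma they lie in the prefix of $C_i$ pointing to $v$. So whether $v$ may be appended depends only on the state, and the check costs $O(b)$.

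The condition also rules out reselecting a vertex. If $v\in C_i$ was already chosen, then $v$ equals $\ell_i$ or precedes it in $C_i$. In either case $\ell_i\to v$ fails, because $v\neq \ell_i$ would give $v\to\ell_i$, and $\ell_i\to\ell_i$ is not an arc. When $v\in C_i$ is appended, the new state replaces $\ell_i$ by $v$. This is consistent with the chain order, since $\ell_i\to v$ forces $v$ to come after $\ell_i$ in $C_i$.

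Let $f(s)$ be the maximum weight of a transitive set reachable with state $s$. Two partial solutions with the same state have identical sets of feasible extensions, so keeping only the maximum weight per state is correct. Every transitive set is produced, namely by its topological order.

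\textbf{Evaluation and running time.} We evaluate $f$ in order of increasing number of non-$\bot$ coordinates, or in any topological order of the transition DAG. Each state tries $n$ candidates $v$ at cost $O(b)$ each, for a total of $O(bn(n+1)^b)$. The output is $V(T)\setminus X^\*$, where $X^\*$ attains $\max_s f(s)$, recovered by standard back-pointers. Its weight is $w(V)-\max_s f(s)$.

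The main obstacle is the prefix lemma together with the argument that checking only the last selected vertex per chain suffices. Once that holds, the remaining steps are routine.
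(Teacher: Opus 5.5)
Your proof is correct and follows essentially the same route as the paper. The shared ingredients are a Dilworth chain decomposition of the poset $\overline{\mcH}(T)$, the prefix property of each chain relative to a fixed vertex, and a DP whose state is the last selected vertex on each chain, with the same $O(bn(n+1)^b)$ count.

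One slip needs fixing. Ordering states by the number of non-$\bot$ coordinates is not a valid evaluation order. Appending a later vertex of an already-used chain keeps that number fixed, for example $(c_1,\bot,\dots,\bot)\to(c_2,\bot,\dots,\bot)$, so a state could propagate a stale value. Your alternative, a topological order of the transition graph, is correct. Its acyclicity follows from your remark that each transition moves strictly forward along a chain, so ordering states by the sum of chain positions suffices, as in the paper.
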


\begin{proof}
By Theorem~\ref{thm:blackbox}(i), the oriented complement
$\overline{\mcH}(T)$ is a poset.  An antichain in this poset is exactly a
clique in $\mcH(T)$, so its width is at most $b$.  Dilworth's
theorem~\cite{dilworth1950} gives a partition into $r\leq b$ chains.  Write
\[
 C_i=(v_{i,1},v_{i,2},\ldots,v_{i,n_i}),
 \qquad v_{i,p}\to v_{i,q}\text{ for }p<q.
\]

Every arc inside a chain belongs to $\overline{\mcH}(T)$ and hence to no
directed triangle of $T$.  Fix a vertex $u$ outside $C_i$ and indices $p<q$.
It is impossible that $u\to v_{i,p}$ and $v_{i,q}\to u$, since then
\[
  u\to v_{i,p}\to v_{i,q}\to u
\]
would be a directed triangle.  Thus the vertices of $C_i$ that point to $u$
form a prefix.  In particular, if $v_{i,q}\to u$, then
$v_{i,p}\to u$ for every $p<q$.

We compute a maximum-weight transitive subtournament.  A state is a vector
\[
  a=(a_1,\ldots,a_r),\qquad 0\leq a_i\leq n_i,
\]
where $a_i$ is the index of the last selected vertex of chain $C_i$;
$a_i=0$ means that no vertex of $C_i$ has been selected.  Let $F(a)$ be the
maximum weight of a transitive sequence realizing state $a$, with value
$-\infty$ when no such sequence exists.  Initialize
\[
  F(0,\ldots,0)=0.
\]

For every reachable state $a$ and every $t>a_i$, we may append $v_{i,t}$ if
\begin{equation}\label{eq:transition-test}
  v_{j,a_j}\to v_{i,t}
  \qquad\text{for every }j\text{ with }a_j>0.
\end{equation}
If $a^{(i,t)}$ is obtained from $a$ by replacing $a_i$ by $t$, the update is
\begin{equation}\label{eq:dp-update}
  F\bigl(a^{(i,t)}\bigr)
  \gets
  \max\left\{F\bigl(a^{(i,t)}\bigr),F(a)+w(v_{i,t})\right\}.
\end{equation}
The state graph is acyclic because every transition strictly increases
$\sum_i a_i$, so the states can be processed in that order.  Back-pointers in
updates~\eqref{eq:dp-update} recover an optimum vertex set.

We verify that the state retains all information relevant to future
extensions.  For the chain containing a proposed new vertex, every previously
selected chain vertex points to it because its index is smaller.  For any other
chain $C_j$, test~\eqref{eq:transition-test} and the prefix property imply that
every earlier selected vertex of $C_j$ also points to the new vertex.  Hence
the appended sequence remains transitive.  This argument depends only on the
last selected vertex of each chain.  Consequently, any two partial sequences
realizing the same state have exactly the same valid one-vertex extensions,
which justifies retaining only the heavier one.

Conversely, let $A$ be any transitive subtournament.  Its unique topological
ordering is compatible with every chain: selected vertices from $C_i$ occur in
increasing chain order.  Reading this ordering from left to right produces
exactly transitions satisfying~\eqref{eq:transition-test}.  Thus the DP
represents every transitive subtournament and only transitive subtournaments.

There are $\prod_i(n_i+1)\leq(n+1)^b$ states.  Each state considers at most
$n$ candidates, each checked against at most $b$ last vertices, giving the
claimed running time.  The complement of a maximum-weight transitive
subtournament is a minimum-weight FVS.
\end{proof}

The chain partition is obtained by the standard bipartite-matching
implementation of Dilworth's theorem.  The same matching computation recovers
a maximum antichain.  Hence the exact subroutine is constructive and does not
enumerate transitive subtournaments.

The DP also strengthens what is known for one of the standard residual
classes used by approximation algorithms.

\begin{corollary}\label{cor:b7-free}
A minimum-weight FVS in a $\mathcal B_7$-free tournament can be computed in
time $O(n^7)$.
\end{corollary}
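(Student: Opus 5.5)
The plan is to combine Lemma~\ref{lem:dp} with Theorem~\ref{thm:blackbox}(ii) for $k=3$: show that a $\mathcal B_7$-free tournament has triangle graph with clique number at most $2^{k-1}=4$, then check the running time bound. Applying Lemma~\ref{lem:dp} with $b=4$ immediately gives $O(n^5)$, which is even better than the claimed $O(n^7)$. So the main content is the clique-number bound, plus an accounting of the preprocessing costs.

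\textbf{Step 1: bounding the clique number.} Take a clique $Q$ in the undirected graph $\mcH(T)$, and orient its edges by $T$. Every pair in $Q$ is joined by a tournament arc that lies in $\mcH(T)$. So the oriented graph $\mcH(T)$ restricted to $Q$ is the full subtournament $T[Q]$. If $|Q|\geq 2^{2}+1=5$, I would try to find a transitive tournament on $5$ vertices inside $\mcH(T)$. This does not come for free, since $T[Q]$ may contain cycles. Two routes are possible:
\begin{itemize}[label={}]
  \item \emph{Route (a).} Use a Ramsey-type bound: every tournament on $2^{m-1}$ vertices contains a transitive subtournament on $m$ vertices. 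This only guarantees a transitive $5$-set once $|Q|\geq 16$. Lemma~\ref{lem:dp} with $b=15$ would then give $O(n^{16})$, which is too weak.
  \item \emph{Route (b).} Apply Theorem~\ref{thm:blackbox}(ii) in its intended form. Here I would check how Ghorbani--Mnich use it: presumably the hypothesis on "the oriented graph $\mcH(T)$ containing a transitive tournament on $2^{k-1}+1$ vertices" is meant to be applied to cliques via the perfection/poset structure.
\end{itemize}

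\textbf{Step 2: locating the exponent $7$.} The target $O(n^7)$ corresponds to $b=6$ in $O_b(n^{b+1})$. So I expect the intended argument to show $\omega(\mcH(T))\leq 6$ for $\mathcal B_7$-free $T$. Equivalently, a clique of size $7$ in $\mcH(T)$ should force a member of $\mathcal B_7$.

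Here is a direct argument I would try. Let $Q$ be a $7$-clique. Every arc of $T[Q]$ lies in a directed triangle of $T$, but that triangle may use vertices outside $Q$. So I cannot conclude directly that $\tauv(T[Q])\geq3$. Instead I would invoke Theorem~\ref{thm:blackbox}(ii) with $k=3$, which needs a transitive $5$-subtournament inside $\mcH(T)$. A $7$-vertex tournament does not always contain a transitive $5$-set; the tournament Ramsey number for transitive $5$ is $14$.

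This is the main obstacle: reconciling the exponent $7$ with the available black box. My first attempt would be the following. Suppose $T[Q]$ with $|Q|=7$ has $\tauv(T[Q])\geq3$. Then $T[Q]$ itself is in $\mathcal B_7$, a contradiction. Otherwise $T[Q]$ becomes transitive after deleting at most $2$ vertices, which yields a transitive $5$-set in $\mcH(T)$. Theorem~\ref{thm:blackbox}(ii) then produces a member of $\mathcal B_7$, again a contradiction. Hence $\omega(\mcH(T))\leq6$.

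\textbf{Step 3: running time.} Computing $\mcH(T)$ takes $O(n^3)$ time by enumerating triangles. The Dilworth chain decomposition via bipartite matching is polynomial, for example $O(n^{2.5})$. Lemma~\ref{lem:dp} with $b=6$ then gives $O(6n(n+1)^6)=O(n^7)$, which dominates the preprocessing.
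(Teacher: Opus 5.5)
Your Step~2 argument is the paper's proof: a $7$-clique $K$ of $\mcH(T)$ with $\tauv(T[K])\leq 2$ contains a transitive set of at least $5$ vertices, all of whose arcs lie in $\mcH(T)$, so Theorem~\ref{thm:blackbox}(ii) with $k=3$ yields a member of $\mathcal B_7$; hence $\omega(\mcH(T))\leq 6$, and Lemma~\ref{lem:dp} with $b=6$ gives $O(n^7)$. You should delete the opening claim that $\omega(\mcH(T))\leq 4$ (and the resulting $O(n^5)$ bound), because it is false: in the regular $5$-vertex tournament ($i\to i+1,\,i+2 \bmod 5$) every arc lies on a directed triangle, so $\mcH(T)=K_5$, yet this tournament is trivially $\mathcal B_7$-free.
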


\begin{proof}
We claim that $\omega(\mcH(T))\leq6$.  Otherwise let $K$ be a seven-vertex
clique of $\mcH(T)$.  Since $T$ is $\mathcal B_7$-free,
$\tauv(T[K])\leq2$.  The complement of a minimum FVS in $T[K]$ is therefore a
transitive subtournament $A$ with $|A|\geq5$.  Every arc of $T[A]$ belongs to
the oriented triangle graph because $K$ is a clique.  Applying
Theorem~\ref{thm:blackbox}(ii) with $k=3$ produces a member of
$\mathcal B_7$ in $T$, a contradiction.  Lemma~\ref{lem:dp} with $b=6$ now
gives the result.
\end{proof}

\section{A constant-size strongly good obstruction}\label{sec:obstruction}

\begin{lemma}\label{lem:obstruction}
If $\omega(\mcH(T))\geq\Lk$, then, for fixed $k$, one can find in polynomial
time a set $Q\subseteq V(T)$ such that
\[
  |Q|\leq \Rk\,\tauv(T[Q]).
\]
\end{lemma}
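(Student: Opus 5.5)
The plan is to take a clique of $\mcH(T)$ of exactly $\Lk$ vertices and split into two cases. Either this clique is itself a good set $Q$, or it contains a large transitive tournament. In the second case Theorem~\ref{thm:blackbox}(ii) gives a member of $\mathcal B_{2k+1}$, which is a good $Q$. Since $k$ is fixed, every object we must search for has constant size. Hence brute-force enumeration keeps everything polynomial, and we never need an algorithmic version of the Ghorbani--Mnich theorem.

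\textbf{Step 1 (finding the clique).} By Theorem~\ref{thm:blackbox}(i), cliques of $\mcH(T)$ are exactly the antichains of the poset $\overline{\mcH}(T)$. A maximum antichain can be computed in polynomial time by the bipartite-matching implementation of Dilworth's theorem, as noted after Lemma~\ref{lem:dp}. The hypothesis $\omega(\mcH(T))\geq\Lk$ then lets us take any subset $K$ of this antichain with $|K|=\Lk$; it is still a clique of $\mcH(T)$. Alternatively, we can simply enumerate all $\Lk$-subsets of $V(T)$, which takes $n^{O(\Lk)}$ time, polynomial for fixed $k$. We then compute $\tauv(T[K])$ by brute force in constant time, because $|K|=\Lk$ is a constant.

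\textbf{Step 2 (case analysis).} If $\tauv(T[K])\geq k|K|/(2k+1)$, output $Q=K$, since then $|K|\leq\Rk\,\tauv(T[K])$. Otherwise, let $A$ be the complement in $K$ of a minimum FVS of $T[K]$. Then $T[A]$ is transitive and
\[
  |A| = |K|-\tauv(T[K]) > \frac{(k+1)|K|}{2k+1} \geq \Pk,
\]
where the last inequality uses $\Lk\geq(2k+1)\Pk/(k+1)$. Since $|A|$ is an integer, $|A|\geq \Pk+1=2^{k-1}+1$. Every arc of $T[A]$ lies in $\mcH(T)$ because $K$ is a clique, so $T[A]$ is a transitive tournament in the oriented graph $\mcH(T)$ on at least $2^{k-1}+1$ vertices. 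Theorem~\ref{thm:blackbox}(ii) therefore guarantees that $T$ contains some $U\in\mathcal B_{2k+1}$.

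\textbf{Step 3 (constructiveness).} Theorem~\ref{thm:blackbox}(ii) is only an existence statement. To find $U$, enumerate all $(2k+1)$-subsets $Q$ of $V(T)$ and compute $\tauv(T[Q])$ for each by brute force. This takes $O(n^{2k+1})$ time times a constant depending on $k$. The guarantee from Step 2 ensures we find some $Q$ with $\tauv(T[Q])\geq k$, and for it
\[
  |Q|=2k+1=\Rk\cdot k\leq\Rk\,\tauv(T[Q]).
\]
A cleaner way to organize the algorithm is to run this enumeration first. If it succeeds, output the set found. If it fails, Theorem~\ref{thm:blackbox}(ii) rules out any transitive tournament on $\Pk+1$ vertices in $\mcH(T)$, so the second case of Step 2 cannot occur and $Q=K$ works.

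The only real obstacle is that Theorem~\ref{thm:blackbox}(ii) is used as a black box without an algorithm. Because all relevant sizes ($2k+1$ and $\Lk$) are constants for fixed $k$, exhaustive search sidesteps this. The remaining work is the arithmetic in Step 2, namely checking that $\Lk$ is large enough to force $|A|\geq\Pk+1$.
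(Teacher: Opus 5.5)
Your proof is correct and follows the paper's argument essentially step for step: you take an $\Lk$-subset $K$ of a maximum antichain of $\overline{\mcH}(T)$, apply the threshold test $\tauv(T[K])\geq k\Lk/(2k+1)$, use integrality to get a transitive subtournament on $\Pk+1$ vertices inside the oriented triangle graph, and make Theorem~\ref{thm:blackbox}(ii) constructive by exhaustive search over $(2k+1)$-subsets. Your reordering (search for a member of $\mathcal B_{2k+1}$ first and, if none exists, output $Q=K$ by the contrapositive of Theorem~\ref{thm:blackbox}(ii)) is only a cosmetic variant of the same argument.
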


\begin{proof}
Compute a maximum antichain of the poset $\overline{\mcH}(T)$ and choose
$\Lk$ of its vertices.  They form an $\Lk$-clique $K$ of $\mcH(T)$.  Since
$\Lk$ depends only on $k$, compute $\tauv(T[K])$ and a maximum transitive
subtournament of $T[K]$ by exhaustive enumeration.

If
\[
  \tauv(T[K])\geq \frac{k}{2k+1}\Lk,
\]
take $Q=K$.  Then
\[
  |Q|=\Lk\leq \frac{2k+1}{k}\tauv(T[K])
  =\Rk\,\tauv(T[Q]).
\]

Otherwise, let $A$ be a maximum transitive subtournament of $T[K]$.  Since a
minimum FVS is the complement of a maximum transitive subtournament,
\[
 |A|=\Lk-\tauv(T[K])
   >\frac{k+1}{2k+1}\Lk
   \geq \Pk.
\]
Both $|A|$ and $\Pk$ are integers, so $|A|\geq\Pk+1$.  Every pair of vertices
in $K$ is an edge of $\mcH(T)$, and hence every arc of the transitive
tournament $T[A]$ belongs to the oriented triangle graph.
Theorem~\ref{thm:blackbox}(ii) supplies a set $Q$ of $2k+1$ vertices with
$\tauv(T[Q])\geq k$.  Therefore
\[
  |Q|=2k+1\leq \frac{2k+1}{k}\tauv(T[Q])
  =\Rk\,\tauv(T[Q]).
\]

For fixed $k$, a suitable set in the last branch can be found without
constructivizing the proof of the black-box theorem: enumerate all
$(2k+1)$-subsets and test their FVS number by exhaustive enumeration.
\end{proof}

For the set $Q$ from Lemma~\ref{lem:obstruction}, define the cost vector
\[
 c_Q(v)=
 \begin{cases}
  1,&v\in Q,\\
  0,&v\notin Q.
 \end{cases}
\]

\begin{lemma}\label{lem:good}
The cost vector $c_Q$ is strongly $\Rk$-good.
\end{lemma}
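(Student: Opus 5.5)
We prove the two bounds $c_Q(S)\le|Q|$ and $|Q|\le\Rk\,\OPT(T,c_Q)$ and chain them. Fix an arbitrary feasible FVS $S$ of $T$.

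\emph{Upper bound on $c_Q(S)$.} Since $c_Q$ is the indicator of $Q$, we have $c_Q(S)=|S\cap Q|\le|Q|$. This holds for every vertex set, feasible or not, so nothing about minimality of $S$ is needed. This is why strong goodness is available here.

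\emph{Lower bound on $\OPT(T,c_Q)$.} Let $O$ be any FVS of $T$. Then $T-O$ is acyclic, so its induced subtournament $T[Q\setminus O]=(T-O)[Q\cap(V\setminus O)]$ is also acyclic. Hence $O\cap Q$ is an FVS of $T[Q]$, and therefore
\[
  c_Q(O)=|O\cap Q|\ge\tauv(T[Q]).
\]
Taking the minimum over $O$ gives $\OPT(T,c_Q)\ge\tauv(T[Q])$.

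\emph{Combining.} By Lemma~\ref{lem:obstruction},
\[
  c_Q(S)\le|Q|\le\Rk\,\tauv(T[Q])\le\Rk\,\OPT(T,c_Q).
\]
This is exactly strong $\Rk$-goodness. Nonnegativity of $c_Q$ is immediate from its definition.

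The only step that needs any care is the hereditary fact that restricting an FVS of $T$ to $Q$ gives an FVS of $T[Q]$. It follows because a subgraph of an acyclic digraph is acyclic. Everything else is direct.
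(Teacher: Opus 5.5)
Your proof is correct and is essentially the paper's argument. Both chain $c_Q(S)\le|Q|\le\Rk\,\tauv(T[Q])\le\Rk\,\OPT(T,c_Q)$, and the last step comes from restricting an FVS of $T$ to $Q$; the paper also shows $\OPT(T,c_Q)=\tauv(T[Q])$, but only the inequality you proved is needed.
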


\begin{proof}
Every FVS of $T$ restricts to an FVS of $T[Q]$, so
$\OPT(T,c_Q)\geq\tauv(T[Q])$.  Conversely, deleting every vertex outside $Q$
costs zero; after doing so, delete a minimum FVS inside $Q$.  Hence
$\OPT(T,c_Q)=\tauv(T[Q])$.  For every feasible FVS $S$,
\[
 c_Q(S)\leq |Q|\leq \Rk\,\tauv(T[Q])
 =\Rk\,\OPT(T,c_Q).
\]
\end{proof}

\section{The deterministic algorithm and its analysis}\label{sec:algorithm}

The algorithm is recursive on a tournament $T$ and a nonnegative rational
weight vector $w$.  The parameter $k$ is fixed throughout.  At each call, the
triangle graph can be constructed in $O(n^3)$ time by enumerating triples.

\begin{description}[leftmargin=3.4cm,style=nextline]
\item[Transitive case.]
If $T$ is transitive, return the empty set.

\item[Zero-weight case.]
If some vertex $v$ has weight zero, recursively solve $(T-v,w|_{V(T)\setminus
\{v\}})$.  If the returned set is already an FVS of $T$, return it; otherwise
add $v$ and return the enlarged set.

\item[Bounded-width case.]
Compute a minimum chain cover of $\overline{\mcH}(T)$.  If it has fewer than
$\Lk$ chains, invoke Lemma~\ref{lem:dp} and return the exact solution.

\item[Reduction case.]
Otherwise recover a maximum antichain from the matching computation and invoke
Lemma~\ref{lem:obstruction} to obtain $Q$.  Set
$\Delta=\min_{v\in Q}w(v)$ and $w'=w-\Delta c_Q$.  Recurse on $(T,w')$ and
return the recursive output.
\end{description}

The zero-weight case preserves feasibility: the recursive output is an FVS of
$T-v$, so any cycle remaining in $T$ after deleting it must contain $v$.
Adding $v$ eliminates all such cycles.  It also preserves the approximation
ratio, because an optimum solution for $T$ can have $v$ removed without
increasing its weight, and the added vertex has weight zero.

In the reduction case all weights are positive before the reduction, so
$\Delta>0$, the vector $w'$ is nonnegative, and at least one vertex of $Q$ has
weight zero under $w'$.  Thus the next recursive call either terminates or
enters the zero-weight case.  This gives strict termination under lexicographic
induction on
\[
  \bigl(|V(T)|,\;|\{v:w(v)>0\}|\bigr).
\]

\begin{theorem}\label{thm:main-k}
For every fixed integer $k\geq2$, the algorithm returns an FVS $S$ satisfying
\[
  w(S)\leq \left(2+\frac1k\right)\OPT(T,w).
\]
\end{theorem}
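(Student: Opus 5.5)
We argue by strong induction on the lexicographic measure $\bigl(|V(T)|,|\{v:w(v)>0\}|\bigr)$ used for termination, proving simultaneously that the output $S$ is an FVS of $T$ and that $w(S)\leq \Rk\OPT(T,w)$, where $\Rk=2+1/k$. The base cases are the transitive case and the bounded-width case. If $T$ is transitive, the empty set is feasible and costs $0\leq\Rk\OPT(T,w)$. In the bounded-width case the chain cover has fewer than $\Lk$ chains. By Dilworth's theorem this gives $\omega(\mcH(T))<\Lk$, so Lemma~\ref{lem:dp} with $b=\Lk-1$ returns an exact optimum. Its weight is then $\OPT(T,w)\leq\Rk\OPT(T,w)$.

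For the zero-weight case, let $S'$ be the recursive output on $(T-v,w|)$. By induction (fewer vertices), $S'$ is an FVS of $T-v$ and $w(S')\leq\Rk\OPT(T-v,w|)$. Feasibility of the returned set follows from the argument already given before the theorem: any cycle of $T$ surviving $S'$ passes through $v$, and $v$ is added in that event. For the cost, we use that $w(v)=0$, so the returned set has weight $w(S')$. We also use $\OPT(T-v,w|)\leq\OPT(T,w)$, because an optimum FVS $O$ of $T$ gives the FVS $O\setminus\{v\}$ of $T-v$ with no larger weight. Hence the cost is at most $\Rk\OPT(T,w)$.

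For the reduction case, all weights are positive here, since otherwise the zero-weight case would have fired. Also, the chain cover has at least $\Lk$ chains, so $\omega(\mcH(T))\geq\Lk$ and Lemma~\ref{lem:obstruction} applies and yields $Q$. With $\Delta=\min_Q w>0$, we write $w=w'+\Delta c_Q$ where $w'\geq0$, and $w'$ has strictly fewer positive entries. The recursive call is on the same vertex set with a smaller second coordinate, so induction gives that the output $S$ is an FVS of $T$ with $w'(S)\leq\Rk\OPT(T,w')$. Lemma~\ref{lem:good} gives $c_Q(S)\leq\Rk\OPT(T,c_Q)$ for this same feasible $S$. The local-ratio inequality from the preliminaries then yields $w(S)\leq\Rk\OPT(T,w)$. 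Finally, we note $\Rk=(2k+1)/k=2+1/k$.

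The only delicate point is ensuring the induction is well-founded and matches the case order. In particular, the reduction case must be reached only when all weights are positive, which holds because the zero-weight case is checked first. We also need strong goodness rather than goodness for minimal solutions, since $S$ is an arbitrary feasible set produced by recursion and need not be minimal. Lemma~\ref{lem:good} covers exactly this, so no minimality pruning is needed.
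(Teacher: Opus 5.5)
Your proof is correct and follows the paper's argument essentially step for step. It uses the same lexicographic induction on $(|V(T)|,|\{v:w(v)>0\}|)$, exactness in the transitive and bounded-width cases, and the fact that $\OPT$ does not increase when a zero-weight vertex is deleted. In the reduction case it combines, as the paper does, the induction hypothesis for $w'$, strong $R_k$-goodness of $c_Q$, and the local-ratio inequality.

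One small bookkeeping remark: your appeal to Dilworth's theorem is in the wrong case. Going from fewer than $L_k$ chains to $\omega(H(T))<L_k$ needs no Dilworth, since an antichain meets each chain at most once. Dilworth is what the reduction case needs: there, minimality of the computed chain cover guarantees an antichain, hence a clique of $H(T)$, of size $L_k$.
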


\begin{proof}
We use the lexicographic induction described above.  The transitive case is
exact, and the bounded-width case is exact by Lemma~\ref{lem:dp}.

In the zero-weight case, let $S'$ be the recursive solution.  If $O$ is an
optimum FVS of $T$, then $O\setminus\{v\}$ is an FVS of $T-v$ and has the same
weight because $w(v)=0$.  Consequently,
\[
  \OPT(T-v,w|_{V(T)\setminus\{v\}})\leq\OPT(T,w).
\]
The induction hypothesis bounds $w(S')$, and the optional addition of $v$
costs zero.

It remains to consider a reduction step.  Let $S$ be the recursive output, and
let $\Delta$ and $c_Q$ be as above.  The induction hypothesis gives
\[
  w'(S)\leq\Rk\,\OPT(T,w').
\]
Since $S$ is feasible and $c_Q$ is strongly $\Rk$-good,
Lemma~\ref{lem:good} gives
\[
  c_Q(S)\leq\Rk\,\OPT(T,c_Q).
\]
Let $O$ be an optimum FVS for the original weights $w$.  Since
$w=w'+\Delta c_Q$,
\[
\begin{aligned}
 \OPT(T,w')+\Delta\OPT(T,c_Q)
 &\leq w'(O)+\Delta c_Q(O)\\
 &=w(O)=\OPT(T,w).
\end{aligned}
\]
Therefore
\[
\begin{aligned}
 w(S)
 &=w'(S)+\Delta c_Q(S)\\
 &\leq \Rk\bigl(\OPT(T,w')+\Delta\OPT(T,c_Q)\bigr)\\
 &\leq \Rk\,\OPT(T,w).
\end{aligned}
\]
This completes the induction.
\end{proof}

\section{Running time and the approximation guarantee}\label{sec:complexity}

For fixed $k$, the values $\Pk$ and $\Lk$ are constants.  The triangle graph
is constructed in $O(n^3)$ time.  A minimum chain cover and a maximum
antichain in the poset $\overline{\mcH}(T)$ are recovered by bipartite matching.
Computing the FVS number of the constant-size clique $K$ costs
$2^{O(\Lk)}$.  Finding the set $Q$ in the second branch of
Lemma~\ref{lem:obstruction} costs
\[
  n^{2k+1}2^{O(k)}
\]
by enumerating all $(2k+1)$-subsets.  The terminal bounded-width call costs
$O_k(n^{\Lk})$ because the number of chains is at most $\Lk-1$.

A reduction creates a zero-weight vertex, which is removed before another
reduction can occur.  Hence there are at most $n$ reductions and at most
$2n+1$ recursive calls.  A uniform upper bound on the arithmetic operation
count is therefore
\[
  2^{O(\Lk)}n^{\Lk+O(1)}=n^{2^{O(k)}}.
\]

All arithmetic is exact over the rationals.  To justify the bit complexity,
put the input weights over a common denominator, whose bit length is at most
the total input encoding length.  Every residual weight is an integer linear
combination of the original numerators.  Each reduction is an elementary
subtraction, so after at most $n$ reductions the coefficient magnitudes are at
most exponential in $n$ and their bit lengths are $O(n)$.  Thus every residual
weight has polynomial encoding length.  The total running time is
\[
  n^{2^{O(k)}}\operatorname{poly}(\langle w\rangle).
\]

\begin{corollary}\label{cor:epsilon}
For every fixed $\varepsilon>0$, weighted $\TFVS$ has a deterministic
approximation algorithm with ratio at most $2+\varepsilon$.  Its running time is
\[
  n^{2^{O(1/\varepsilon)}}\operatorname{poly}(\langle w\rangle).
\]
\end{corollary}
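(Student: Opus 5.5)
Given a fixed $\varepsilon>0$, the plan is to instantiate Theorem~\ref{thm:main-k} with a suitable integer $k$ and read off the running time from the analysis in Section~\ref{sec:complexity}. Set
\[
  k=\max\bigl\{2,\lceil 1/\varepsilon\rceil\bigr\}.
\]
Then $k\geq2$, so Theorem~\ref{thm:main-k} applies, and $1/k\leq\varepsilon$. The algorithm for this $k$ therefore returns a feasible FVS $S$ with
\[
  w(S)\leq\Bigl(2+\tfrac1k\Bigr)\OPT(T,w)\leq(2+\varepsilon)\OPT(T,w).
\]
The algorithm is deterministic, since every step is deterministic: triangle-graph construction, matching-based Dilworth decomposition, exhaustive enumeration over constant-size sets, and the dynamic program of Lemma~\ref{lem:dp}.

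For the running time, I would take the bound $n^{2^{O(k)}}\operatorname{poly}(\langle w\rangle)$ established just before the corollary and substitute $k=O(1/\varepsilon)$. For $\varepsilon\leq 1/2$ we have $k\leq 1/\varepsilon+1\leq 3/(2\varepsilon)$. For larger $\varepsilon$, $k=2$ is a constant. In both cases $2^{O(k)}=2^{O(1/\varepsilon)}$, absorbing constants into the $O$. It is worth checking that the hidden constant in $2^{O(k)}$ does not depend on $k$. Indeed, $\Lk=\lceil(2k+1)2^{k-1}/(k+1)\rceil\leq 2^k$, so the exponent $\Lk+O(1)$ together with the factors $2^{O(\Lk)}$ and $n^{2k+1}2^{O(k)}$ is at most $n^{c\,2^{k}}$ for an absolute constant $c$, once $n\geq2$. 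Small $n$ is trivial.

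The only point needing care is this uniformity of constants, which I would handle via the explicit estimate $\Lk\leq 2^k$. Everything else is direct substitution into results already proved.
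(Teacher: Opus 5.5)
Your proposal is correct and matches the paper's proof. The paper makes the same choice $k=\max\{2,\lceil 1/\varepsilon\rceil\}$ and reads off both the ratio and the running time from Theorem~\ref{thm:main-k} and the preceding running-time analysis. Your explicit check that $\Lk\leq 2^k$ keeps the constant in $n^{2^{O(k)}}$ uniform in $k$, a detail the paper leaves implicit; as in the paper, the bound $2^{O(1/\varepsilon)}$ should be read for bounded $\varepsilon$ (say $\varepsilon\leq 1$), since for $k=2$ the exponent does not shrink as $\varepsilon$ grows.
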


\begin{proof}
Choose $k=\max\{2,\lceil1/\varepsilon\rceil\}$.  Then
$1/k\leq\varepsilon$, and Theorem~\ref{thm:main-k} gives the claimed ratio and
running time.
\end{proof}

\section{Discussion}

The dependence on $k$ is intentionally not optimized.  The exponential
threshold $\Pk=2^{k-1}$ comes from the structural theorem for triangle graphs,
while the bounded-width DP contributes the exponent $\Lk=O(2^k)$.  Improving
the dependence on $\varepsilon$ would require either a stronger obstruction
theorem or a more economical exact algorithm in the residual class.

The proof is specific to tournaments.  The triangle-graph structural theorem
also applies to quasi-transitive digraphs, but extending the dynamic program
requires a direct treatment of non-edges and of the tournament embedding used
in that theorem.  We therefore make no claim for the larger class here.

\bibliographystyle{alpha}
\bibliography{references}

@article{aprile2023,
  author  = {Aprile, Manuel and Drescher, Matthew and Fiorini, Samuel and Huynh, Tony},
  title   = {A 7/3-Approximation Algorithm for Feedback Vertex Set in Tournaments via {Sherali--Adams}},
  journal = {Discrete Applied Mathematics},
  volume  = {337},
  pages   = {149--160},
  year    = {2023},
  doi     = {10.1016/j.dam.2023.04.016}
}

@article{baryehuda1981,
  author  = {Bar-Yehuda, Reuven and Even, Shimon},
  title   = {A Linear-Time Approximation Algorithm for the Weighted Vertex Cover Problem},
  journal = {Journal of Algorithms},
  volume  = {2},
  number  = {2},
  pages   = {198--203},
  year    = {1981},
  doi     = {10.1016/0196-6774(81)90020-2}
}

@article{cai2001,
  author  = {Cai, Mao-cheng and Deng, Xiaotie and Zang, Wenan},
  title   = {An Approximation Algorithm for Feedback Vertex Sets in Tournaments},
  journal = {SIAM Journal on Computing},
  volume  = {30},
  number  = {6},
  pages   = {1993--2007},
  year    = {2001},
  doi     = {10.1137/S0097539798338163}
}

@article{dilworth1950,
  author  = {Dilworth, Robert P.},
  title   = {A Decomposition Theorem for Partially Ordered Sets},
  journal = {Annals of Mathematics},
  volume  = {51},
  number  = {1},
  pages   = {161--166},
  year    = {1950},
  doi     = {10.2307/1969503}
}

@inproceedings{ghorbani2026,
  author    = {Ghorbani, Ebrahim and Mnich, Matthias},
  title     = {A 9/4-Approximation for Directed Feedback Vertex Sets in Quasi-Transitive Digraphs},
  booktitle = {53rd International Colloquium on Automata, Languages, and Programming (ICALP 2026)},
  series    = {Leibniz International Proceedings in Informatics (LIPIcs)},
  volume    = {374},
  pages     = {96:1--96:16},
  publisher = {Schloss Dagstuhl -- Leibniz-Zentrum f{\"u}r Informatik},
  year      = {2026},
  doi       = {10.4230/LIPIcs.ICALP.2026.96},
  url       = {https://doi.org/10.4230/LIPIcs.ICALP.2026.96}
}

@article{lokshtanov2021,
  author  = {Lokshtanov, Daniel and Misra, Pranabendu and Mukherjee, Joydeep and Panolan, Fahad and Philip, Geevarghese and Saurabh, Saket},
  title   = {2-Approximating Feedback Vertex Set in Tournaments},
  journal = {ACM Transactions on Algorithms},
  volume  = {17},
  number  = {2},
  pages   = {11:1--11:14},
  year    = {2021},
  doi     = {10.1145/3446969}
}

@inproceedings{mnich2016,
  author    = {Mnich, Matthias and Vassilevska Williams, Virginia and V{\'e}gh, L{\'a}szl{\'o} A.},
  title     = {A 7/3-Approximation for Feedback Vertex Sets in Tournaments},
  booktitle = {24th Annual European Symposium on Algorithms (ESA 2016)},
  series    = {Leibniz International Proceedings in Informatics (LIPIcs)},
  volume    = {57},
  pages     = {67:1--67:14},
  publisher = {Schloss Dagstuhl -- Leibniz-Zentrum f{\"u}r Informatik},
  year      = {2016},
  doi       = {10.4230/LIPIcs.ESA.2016.67}
}

\end{document}